\newtheorem{theorem}{Theorem}
\newtheorem{lemma}{Lemma}
\newtheorem{defn}{Definition}
\newtheorem{eg}{Example}
\newtheorem{proposition}{Proposition}
\begin{document}

% paper title
\title{STBCs from Representation of Extended Clifford Algebras}

% author names and affiliations
% use a multiple column layout for up to three different
% affiliations
\author{
\authorblockN{G. Susinder Rajan}
\authorblockA{ECE Department \\
Indian Institute of Science\\
Bangalore 560012, India\\
susinder@ece.iisc.ernet.in}
\and
\authorblockN{B. Sundar Rajan}
\authorblockA{ECE Department\\
Indian Institute of Science\\
Bangalore 560012, India\\
bsrajan@ece.iisc.ernet.in}
}

% avoiding spaces at the end of the author lines is not a problem with
% conference papers because we don't use \thanks or \IEEEmembership
% for over three affiliations, or if they all won't fit within the width
% of the page, use this alternative format:
%
%\author{\authorblockN{Michael Shell\authorrefmark{1},
%Homer Simpson\authorrefmark{2},
%James Kirk\authorrefmark{3},
%Montgomery Scott\authorrefmark{3} and
%Eldon Tyrell\authorrefmark{4}}
%\authorblockA{\authorrefmark{1}School of Electrical and Computer Engineering\\
%Georgia Institute of Technology,
%Atlanta, Georgia 30332--0250\\ Email: mshell@ece.gatech.edu}
%\authorblockA{\authorrefmark{2}Twentieth Century Fox, Springfield, USA\\
%Email: homer@thesimpsons.com}
%\authorblockA{\authorrefmark{3}Starfleet Academy, San Francisco, California 96678-2391\\
%Telephone: (800) 555--1212, Fax: (888) 555--1212}
%\authorblockA{\authorrefmark{4}Tyrell Inc., 123 Replicant Street, Los Angeles, California 90210--4321}}

% make the title area
\maketitle

\begin{abstract}
A set of sufficient conditions to construct $\lambda$-real symbol Maximum Likelihood (ML) decodable STBCs have recently been provided by Karmakar et al. STBCs satisfying these sufficient conditions were named as Clifford Unitary Weight (CUW) codes. In this paper, the maximal rate (as measured in complex symbols per channel use) of CUW codes for $\lambda=2^a,a\in\mathbb{N}$ is obtained using tools from representation theory. Two algebraic constructions of codes achieving this maximal rate are also provided. One of the constructions is obtained using linear representation of finite groups whereas the other construction is based on the concept of right module algebra over non-commutative rings. To the knowledge of the authors, this is the first paper in which matrices over non-commutative rings is used to construct STBCs. An algebraic explanation is provided for the 'ABBA' construction first proposed by Tirkkonen et al and the tensor product construction proposed by Karmakar et al. Furthermore, it is established that the $4$ transmit antenna STBC originally proposed by Tirkkonen et al based on the ABBA construction is actually a single complex symbol ML decodable code if the design variables are permuted and signal sets of appropriate dimensions are chosen. 
\end{abstract}

\section{Introduction}
\label{sec1}
Space-Time Coding is a coding technique for the Multiple Input Multiple Output (MIMO) channel to exploit the spatial diversity that is available in the physical channel. However, Maximum Likelihood (ML) decoding of general STBCs becomes computationally prohibitive especially for large number of transmit antennas. For this reason, Complex Orthogonal Designs (CODs) \cite{TJC,TiH} and quasi-orthogonal designs \cite{TBH,Jaf} were studied. In \cite{KhR}, a class of codes called Co-ordinate Interleaved Orthogonal Designs (CIODs) were proposed which were single complex ML decodable and also had higher rate than CODs. Further, necessary and sufficient conditions to obtain single complex symbol ML decodable codes were provided. Recently in \cite{KaR1,KaR3,KaR2,YGT}, this notion of reduced ML decoding complexity has been put in the more general framework of $g$-group ML decodable or $\lambda$-real symbol ML decodable STBCs. In \cite{KaR1,KaR3}, a special class of multi-symbol ML decodable codes called Clifford Unitary Weight (CUW) codes was introduced. These STBCs were based on a set of sufficient conditions for $g$-group ML decodability. The authors of \cite{KaR1,KaR3} also provided an explicit construction of CUW codes by performing matrix manipulations on the representation matrices of Clifford algebras. Recently in \cite{RaR_isit_p3}, an algebraic framework based on 'Extended Clifford Algebras' (Definition \ref{def_ECA}) has been introduced for constructing CUW codes. Using the framework of Extended Clifford Algebras, $4$-group ML decodable distributed space-time codes for cooperative diversity have been obtained in \cite{RaR_isit_p3}. Distributed space-time codes need to satisfy certain special conditions and these issues have been addressed in \cite{RaR_isit_p3}. Interestingly, using 'Extended Clifford Algebras' and some special signal set constructions, distributed differential space-time codes have been obtained in \cite{RaR_isit_p2} for application in cooperative wireless networks with no channel state information. 

In this paper, we focus only on STBCs for the colocated MIMO channel (no co-operation case). Though few CUW code constructions are available in the literature \cite{KaR1,KaR3,KaR2,YGT}, the maximal rate of CUW codes is still not addressed. For the solitary case of single complex symbol ML  decodable codes alone, a solution is available in \cite{KaR1}. In this paper, we obtain the maximal rate of CUW codes using representation theory of finite groups and the algebraic framework of Extended Clifford Algebras.

The main contributions of this paper are as follows.
\begin{itemize}
\item Using tools from representation theory, the maximal rate of CUW codes is found for the case of $\lambda=2^a,a\in\mathbb{N}$. 
\item Two algebraic constructions of CUW codes with maximal rates are then provided. One of the constructions (in the proof of Theorem \ref{thm_maxrate}) is obtained using linear representation of finite groups whereas the other construction (in Section \ref{sec5}) is based on the concept of right module algebra over non-commutative rings. To our knowledge, this is the first paper in which matrices over non-commutative rings is used to construct STBCs.  
\item An algebraic explanation is given for the 'ABBA' construction originally proposed in \cite{TBH} and the recently proposed tensor product construction in \cite{KaR2} which are also of maximal rate.
\item It is shown that the STBC for $4$ transmit antennas proposed in \cite{TBH} based on the ABBA construction is actually a single complex symbol ML decodable code if the design variables are permuted and appropriate signal sets are chosen. Previously this design has been treated to be a double complex symbol ML decodable code.
\end{itemize}

The rest of the paper is organized as follows:  
In Section \ref{sec2}, the problem statement is described. In Section \ref{sec3}, an introduction to the algebraic tools used in this paper is given. The maximal rate of CUW codes and an algebraic explanation for the tensor product construction in \cite{KaR2} is given in Section \ref{sec4}. An algebraic explanation for the ABBA construction is given in Section \ref{sec5} and discussions on further work comprise Section \ref{sec6}.

\noindent
\textbf{Notation:}
For a complex matrix $A$, $A_{I}$ denotes the real matrix obtained by taking the real parts of all the entries of $A$ and $A_{Q}$ denotes the real matrix obtained by taking the imaginary parts of all the entries of $A$. If $M$ is a module over a ring $\mathcal{B}$, then $End_{\mathcal{B}}(M)$ denotes the set of all $\mathcal{B}$ linear maps from $M$ to $M$. For a vector space $V$, $GL(V)$ denotes the set of invertible linear transformations from $V$ to itself.

\section{Problem Statement}
\label{sec2}
In this section, we briefly introduce CUW codes and formulate the problem statement. We refer the readers to \cite{KaR3} for a detailed explanation.

Consider a $N_t\times N_t$ linear design or linear STBC $S$ in $K$ real variables $x_1,x_2,\dots,x_K$. It can be written as
\begin{equation}
S=\sum_{i=1}^{K}x_iA_i
\end{equation} 
where, $A_i\in\mathbb{C}^{N_t\times N_t}$ is called the weight matrix corresponding to the real variable $x_i$. Let $K=g\lambda$ where, $\lambda$ denotes the maximum number of real variables we would like to jointly decode and $g$ is the number of groups into which the $K$ real variables will be partitioned into. Let us list down the weight matrices in the form of an array as follows. 
\begin{center}
\begin{tabular}{c|ccc}
$A_1$ & $A_{\lambda+1}$ & \dots & $A_{(g-1)\lambda+1}$\\
\hline
$A_2$ & $A_{\lambda+2}$ & \dots & $A_{(g-1)\lambda+2}$\\
$\vdots$ & $\vdots$ & $\ddots$ & $\vdots$\\
$A_{\lambda}$ & $A_{2\lambda}$ & \dots & $A_{K}$
\end{tabular}
\end{center}  
Without loss of generality we shall consider the following partition. All the weight matrices in one column will belong to one group. Moreover, we assume all the weight matrices to be unitary and furthermore set $A_1=I$. Then, it has been shown in \cite{KaR1,KaR3} that to obtain a a $g$-group ML decodable linear design as required, it is sufficient to design the matrices in the first row and the first column such that they satisfy the following conditions. 
\begin{enumerate}
\item The matrices in the first row except $A_1=I$ should form a Hurwitz-Radon family. In other words, all the matrices in the first row except $A_1=I$ should square to $-I$ and should pair-wise anti-commute among themselves. 
\item The matrices in the first column should square to $I$ and should commute with all the matrices in the first row and first column.
\end{enumerate}
Once such a set of matrices is obtained, the matrix in the $i$-th row and the $j$-th column can be filled up by multiplying $A_i$ and $A_{(j-1)\lambda+1}$. Such a set of weight matrices will result in a linear design which will be $\lambda$-real symbol ML decodable. In other words, the ML decoding metric will split into a sum of $g$-terms such that each term is a function of atmost $\lambda$ real variables. Thus ML decoding can be performed by jointly decoding atmost $\lambda$ real variables. Linear STBCs satisfying the sufficient conditions 1) and 2) stated above are called CUW codes. 

In this paper, we are interested in the maximum rate $R=\frac{K}{2N_t}$ (in complex symbols per channel use) of CUW codes. This problem can be formally stated in many equivalent ways. Some of them are listed as follows.
\begin{enumerate}
\item Given $\lambda$ and $N_t$ what is the maximum value of $R$?
\item Given $g$ and $N_t$ what is the maximum value of $R$?
\item \label{question} Given $g$ and $\lambda$ what is the minimum value of $N_t$? 
\end{enumerate}
For $\lambda=2$, the solution to the first question is reported in \cite{KaR1}. In Section \ref{sec4} of this paper the solution to question number \ref{question}) for $\lambda=2^a,a\in\mathbb{N}$ is provided.

\section{Algebraic tools}
\label{sec3}
In this Section, we briefly introduce the algebraic framework first proposed in \cite{RaR_isit_p3} to construct CUW codes.  Then we recall few definitions from algebra and restate the problem described in the previous section in algebraic terms.
\begin{defn}
A nonempty set $\mathcal{B}$ equipped with two binary operations called addition and multiplication denoted by $+$ and $.$ is called a ring denoted by $(\mathcal{B},+,.)$ if 
\begin{enumerate}
\item $(\mathcal{B},+)$ is a Abelian group
\item $(\mathcal{B},.)$ is a monoid with multiplicative identity $1$
\item $x.(y+z)=x.y+x.z,~\forall~x,y,z\in\mathcal{B}$
\item $(x+y).z=x.z+y.z,~\forall~x,y,z\in\mathcal{B}$
\end{enumerate}
\end{defn}
\begin{defn}
A nonempty set $\mathcal{A}$ equipped with two binary operations called addition and multiplication denoted by $+$ and $.$ is called a right module algebra over a ring $\cal{B}$ if 
\begin{enumerate}
\item $(\mathcal{A},+,.)$ is a ring
\item There is a map $(x,\alpha)\to x\alpha$ of $\mathcal{A}\times\mathcal{B}$ into $\mathcal{A}$ satisfying the following for $\alpha,\beta,1\in\mathbb{B}$ and $x,y\in\mathbb{A}$.
\begin{equation}
\begin{array}{c}
(x+y)\alpha=x\alpha+y\alpha\\
x(\alpha+\beta)=x\alpha+x\beta\\
x(\alpha\beta)=(x\alpha)\beta\\
x1=x
\end{array}
\end{equation}
\end{enumerate}
\end{defn}
Note that in the standard mathematical literature (for eg \cite{Jac}), algebra is defined over a field. Since our definition differs from the definition in \cite{Jac}, we have given the name 'right module algebra' in order to distinguish it from the concept of algebra over a field.
\begin{defn}\cite{RaR_isit_p3}
\label{def_ECA}
Let $L=2^a,a\in \mathbb{N}$. An Extended Clifford algebra denoted by $\mathbb{A}_n^L$ is the associative algebra over $\mathbb{R}$~ generated by $n+a$ objects $\gamma_k,\ k=1,\dots,n$ and $\delta_i,\ i=1,\dots,a$ which satisfy the following relations:
\begin{itemize}
\item $\gamma_k^2=-1,\ \forall\ k=1,\dots,n$
\item $\gamma_k\gamma_j=-\gamma_j\gamma_k,\ \forall\ k\neq j$
\item $\delta_k^2=1,\ \forall k=1,\dots,a$
\item $\delta_k\delta_j=\delta_j\delta_k,\ \forall\ 1\leq k,j\leq a$
\item $\delta_k\gamma_j=\gamma_j\delta_k,\ \forall\ 1\leq k\leq a, 1\leq j\leq n$
\end{itemize}
\end{defn}

It is clear that the classical Clifford algebra, denoted by $Cliff_n$, is obtained when only the first two relations are satisfied and there are no $\delta_i$. $Cliff_n$ is a sub-algebra of $\mathbb{A}_n^L$. Let $\mathscr{B}_n$ be the natural $\mathbb{R}$~ basis for this sub-algebra. Then a natural $\mathbb{R}$~ basis for $\mathbb{A}_n^L$ is
\begin{equation}
\begin{array}{rl}
\mathscr{B}_n^L=&\mathscr{B}_n\cup\left\{\mathscr{B}_n\delta_i|i=1,\dots,a\right\}\\
&\bigcup_{m=2}^{a}\mathscr{B}_n\left\{\prod_{i=1}^{m}\delta_{k_i}|1\leq k_i\leq k_{i+1}\leq a\right\}
\end{array}
\end{equation}
\noindent
where
\begin{equation}
\begin{array}{rl}
\mathscr{B}_n=&\left\{1\right\}\bigcup\left\{\gamma_i|i=1,\dots,n\right\}\\
&\bigcup_{m=2}^{n}\left\{\prod_{i=1}^{m}\gamma_{k_i}|1\leq k_i\leq k_{i+1}\leq n\right\}.
\end{array}
\end{equation}
The algebra $\mathbb{A}_n^L$ over $\mathbb{R}$ can also be viewed as a right module algebra over the base ring $Cliff_n$. We will use this fact later in Section \ref{sec5}.

From the defining relations of the generators of the Extended Clifford Algebra, it can be observed that the symbols $1$, $\gamma_1$, $\gamma_2$, $\dots$, $\gamma_n$ satisfy relations similar to that satisfied by the weight matrices that we need in the first row (squaring to $-1$ and anticommuting). Similarly the symbols $\delta_k,k=1,\dots,a$, $\bigcup_{m=2}^{a}\prod_{i=1}^{m}\delta_{k_i}|1\leq k_i\leq k_{i+1}\leq a$ satisfy relations similar to that satisfied by the weight matrices that we need in the first column (squaring to $1$ and commuting with all other elements). Thus when the weight matrices of any CUW code are expressed in the array form as discussed in the previous section, the matrices in the first row will simply be matrix representations of the symbols $1$, $\gamma_1$, $\gamma_2$, $\dots$, $\gamma_n$ of a Extended Clifford Algebra. Similarly the matrices in the first column are nothing but matrix representation of the symbols $\delta_k,k=1,\dots,a$, $\bigcup_{m=2}^{a}\prod_{i=1}^{m}\delta_{k_i}|1\leq k_i\leq k_{i+1}\leq a$ of a Extended Clifford Algebra. Thereby the entire problem as described in the previous section can now be restated in algebraic terms as follows.

\textit{What is the minimum matrix size $N_t$ in which the algebra $\mathbb{A}_{(g-1)}^\lambda$ has a non-trivial matrix representation?}
 
\section{Maximum rate of CUW codes}
\label{sec4}
In this section, the maximum rate of CUW codes is found using tools from representation theory for $\lambda=2^a,a\in\mathbb{N}$. Then, an algebraic explanation is given for the tensor product based CUW code construction in \cite{KaR2}.

As discussed in the previous section, the problem is to find the minimum dimension in which we can have a representation of the algebra $\mathbb{A}_{(g-1)}^\lambda$. But this problem appears to be difficult to solve directly. Hence we take an alternate approach which is similar to the approach in \cite{TiH} wherein matrix representations of Clifford algebras was obtained using matrix representations of the Clifford group. First we find a finite group with respect to multiplication in the algebra $\mathbb{A}_{(g-1)}^{\lambda}$ such that it contains the elements of the natural $\mathbb{R}$-basis of $\mathbb{A}_{(g-1)}^{\lambda}$ denoted by $\mathscr{B}_{(g-1)}^{\lambda}$. Then we find a suitable representation of this finite group such that it can be extended to a representation of the algebra.
\begin{proposition}
The set of elements
\begin{equation}
G=\mathscr{B}_{(g-1)}^{\lambda}\cup\left\{-b|b\in\mathscr{B}_{(g-1)}^{\lambda} \right\}
\end{equation}
is a finite group with respect to multiplication in $\mathbb{A}_{(g-1)}^{\lambda}$. Further, the group $G$ is a direct product of its subgroups $G_{\mu}$ and $G_{\delta}$, where
\begin{equation}
\begin{array}{rcl}
G_{\mu}&=&\mathscr{B}_{(g-1)}\cup\left\{-b|b\in\mathscr{B}_{(g-1)}\right\},\\
G_{\delta}&=&\left\{1,\delta_1\right\}\times\left\{1,\delta_2\right\}\times\dots\times\left\{1,\delta_a\right\}.
\end{array}
\end{equation}
\end{proposition}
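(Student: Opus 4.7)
The plan is to verify the group axioms for $G$ under the multiplication inherited from $\mathbb{A}_{(g-1)}^{\lambda}$, and then exhibit the internal direct-product structure by checking the three standard conditions: elementwise commutativity of the factors, trivial intersection, and $G_\mu G_\delta = G$. The key tool throughout is the fact that every element of $\mathscr{B}_{(g-1)}^{\lambda}$ admits a canonical normal form $b = \mu\, d$ with $\mu \in \mathscr{B}_{(g-1)}$ (a strictly increasing product of the $\gamma_k$'s) and $d$ a strictly increasing product of the $\delta_i$'s. This follows directly from the defining relations of the Extended Clifford Algebra, since $\delta_i$ commutes with every $\gamma_k$ and every other $\delta_j$.

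For closure I would proceed as follows. Given two elements $\mu_1 d_1, \mu_2 d_2 \in \mathscr{B}_{(g-1)}^{\lambda}$, the commutation relation $\delta_i \gamma_k = \gamma_k \delta_i$ lets one slide $d_1$ past $\mu_2$, giving $(\mu_1 d_1)(\mu_2 d_2) = (\mu_1 \mu_2)(d_1 d_2)$. The first factor reduces to $\pm \mu'$ with $\mu' \in \mathscr{B}_{(g-1)}$ by the standard normal-form argument in $Cliff_{g-1}$: use pairwise anticommutation to sort the $\gamma_k$'s in increasing order (picking up signs), then use $\gamma_k^2 = -1$ to collapse repeated factors. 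The second factor reduces to an element $d' \in G_\delta$ by sorting (which is free, since the $\delta_i$'s commute) and applying $\delta_i^2 = 1$ to collapse repeats. Hence the product lies in $G$. Handling signs $-b \cdot b'$ etc.\ is immediate, since $-1$ is central. Associativity is inherited from the algebra, $1$ is the identity, and inverses exist because each generator satisfies $\gamma_k^{-1} = -\gamma_k$ and $\delta_i^{-1} = \delta_i$, so any word has a word-inverse obtained by reversing the order and adjusting the sign.

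For the direct-product decomposition, I would first observe that $G_\mu$ and $G_\delta$ are themselves subgroups by the same closure argument applied to the cases $d_1 = d_2 = 1$ and $\mu_1 = \mu_2 = 1$ respectively. Elementwise commutativity $G_\mu G_\delta = G_\delta G_\mu$ is immediate from the relation $\delta_i \gamma_k = \gamma_k \delta_i$. Triviality of the intersection $G_\mu \cap G_\delta = \{1\}$ is the content of the uniqueness of the normal form: a nontrivial element of $G_\delta$ involves at least one $\delta_i$, which cannot be rewritten as a (signed) product of $\gamma_k$'s, because $\mathscr{B}_{(g-1)}^{\lambda}$ is an $\mathbb{R}$-basis of $\mathbb{A}_{(g-1)}^{\lambda}$. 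Finally, the factorization $b = \mu d$ shows $G = G_\mu G_\delta$, which combined with the previous two properties gives the internal direct product $G \cong G_\mu \times G_\delta$.

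The main obstacle is not the group-theoretic bookkeeping, which is routine, but the appeal to uniqueness of the normal form. Everything else (closure, commutativity of the two factors, existence of inverses) is a mechanical consequence of the five defining relations. It is crucial that $\mathscr{B}_{(g-1)}^{\lambda}$ has been declared to be an $\mathbb{R}$-basis of $\mathbb{A}_{(g-1)}^{\lambda}$; without this, one could not rule out spurious relations between the $\mu$-part and the $d$-part that would cause $G_\mu \cap G_\delta$ to be larger than $\{1\}$. Given that this linear independence is already built into the construction of the algebra in Definition~\ref{def_ECA}, the proof reduces to a careful application of the (anti)commutation and square relations.
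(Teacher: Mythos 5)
Your proposal is correct and follows essentially the same route as the paper: direct verification of the group axioms (with inverses coming from $\gamma_k^2=-1$, $\delta_i^2=1$) followed by the internal direct-product criterion, where your ``trivial intersection plus $G=G_\mu G_\delta$ plus elementwise commutativity'' is the standard equivalent of the paper's ``unique factorization $s=s_1s_2$ plus commutativity.'' You supply more detail than the paper on closure and on why the linear independence of $\mathscr{B}_{(g-1)}^{\lambda}$ forces $G_\mu\cap G_\delta=\{1\}$, which the paper leaves implicit.
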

\begin{proof}
The multiplication is associative and the unit is $1$. The inverse of the element $\pm\prod_{i=1}^{m}\gamma_{k_i}$ is $\pm(-1)^{\lceil\frac{m}{2}\rceil}\prod_{i=1}^{m}\gamma_{k_i}$. The inverse of the element $\prod_{i=1}^{m}\delta_{k_i}$ is itself. Similarly, it is easy to find the inverse of the other elements. The set $G_{\mu}$ is nothing but the well known Clifford group \cite{TiH}. The set $G_\delta$ is a group obviously, since it is a direct product of the cyclic group $C_2$, $a$ number of times. The group $G$ is a direct product of $G_{\mu}$ and $G_{\delta}$ because of the following reasons.
\begin{enumerate}
\item Each $s\in G$ can be written uniquely in the form $s=s_1s_2$ with $s_1\in G_{\mu}$ and $s_2\in G_{\delta}$.
\item For $s_1\in G_{\mu}$ and $s_2\in G_{\delta}$, we have $s_1s_2=s_2s_1$.
\end{enumerate} 
\end{proof}

Thus, the problem is simplified to finding the matrix representation of this finite group $G$. To start, we quickly recall some basic concepts in linear representation of finite groups. We refer the readers to \cite{Serre} for a formal introduction.
\begin{defn}\cite{Serre}
Let $G$ be a finite group with identity element $1$ and let $V$ be a finite dimensional vector space over $\mathbb{C}$. A linear representation of $G$ in $V$ is a group homomorphism $\rho$ from $G$ into the group $GL(V)$. The dimension of $V$ is called the degree of the representation.
\end{defn}
\begin{enumerate}
\item[R1:] Irreducible representations are representations with no invariant subspaces. 
\item[R2:] Every representation is a direct sum of irreducible representations. They are equivalent to block-diagonal representations, with irreducible representation matrices on the block diagonal.
\item[R3:] Two representations $R$ and $R'$ of $G$ are equivalent, if there exists a similarity transform $U$ so that
$$
R'(x)=U^{-1}R(x)U,~\forall~x\in G
$$
\item[R4:] Unitary group representations are representations in terms of unitary matrices
\item[R5:] Every representation is equivalent to a unitary representation    
\end{enumerate}
\begin{theorem}\cite{Serre}
\label{thm_abelian}
All the irreducible representations of an Abelian group have degree $1$.
\end{theorem}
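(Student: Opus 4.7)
The plan is to exploit commutativity to show that every representation operator acts as a scalar on the representation space, which forces the space to be one-dimensional under the irreducibility hypothesis. This is essentially a direct application of Schur's lemma, but I would spell out the argument from scratch using eigenspaces, since the excerpt has not yet introduced Schur's lemma.

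First I would fix an irreducible representation $\rho: G \to GL(V)$ and pick an arbitrary element $g \in G$. Because $V$ is a finite-dimensional complex vector space, the linear operator $\rho(g)$ has at least one eigenvalue $\mu \in \mathbb{C}$; let $V_\mu \subseteq V$ be the corresponding eigenspace, which is nonzero. The key observation is that for every $h \in G$, the abelian hypothesis gives $gh = hg$, hence $\rho(g)\rho(h) = \rho(h)\rho(g)$. Consequently, if $v \in V_\mu$, then $\rho(g)(\rho(h)v) = \rho(h)\rho(g)v = \mu\,\rho(h)v$, so $\rho(h)v \in V_\mu$. Thus $V_\mu$ is $G$-invariant.

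Next, since $\rho$ is irreducible and $V_\mu$ is a nonzero invariant subspace, I would invoke the defining property of irreducibility (item R1) to conclude $V_\mu = V$. This means $\rho(g) = \mu\, \mathrm{Id}_V$, i.e. $\rho(g)$ is a scalar operator. As $g \in G$ was arbitrary, every $\rho(g)$ is a scalar multiple of the identity. Consequently, every one-dimensional subspace of $V$ is preserved by the entire representation. Applying irreducibility once more, $V$ cannot contain any proper nonzero invariant subspace, and since any line through the origin is such a subspace when $\dim V \geq 2$, it must be that $\dim V = 1$, which is the claim.

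The argument is essentially routine; the only mild subtlety is the reliance on $\mathbb{C}$ being algebraically closed so that $\rho(g)$ is guaranteed to have an eigenvalue, which is why Definition 3 is stated over $\mathbb{C}$. No representation-theoretic machinery beyond items R1 and this eigenvalue existence is required, so there is no real obstacle; the proof is a clean commutativity-plus-irreducibility argument.
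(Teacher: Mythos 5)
Your proof is correct. The paper itself gives no proof of this theorem---it is quoted from Serre as a known result---and your argument (each $\rho(g)$ commutes with the whole image, so its eigenspace over $\mathbb{C}$ is a nonzero invariant subspace, hence all of $V$, hence $\rho(g)$ is scalar and every line is invariant, forcing $\dim V=1$) is exactly the standard proof found in the cited reference. No gaps; the reliance on algebraic closure of $\mathbb{C}$ for the existence of an eigenvalue is correctly identified as the one essential ingredient.
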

\begin{lemma}\cite{Serre}
\label{lem_tensor}
Let $\rho_1:G_1\to GL(V_1)$ and $\rho_2:G_2\to GL(V_2)$ be linear representations of groups $G_1$ and $G_2$ in vector spaces $V_1$ and $V_2$ respectively. Then $\rho_1\otimes\rho_2$ is a linear representation of $G_1\times G_2$ into $V_1\otimes V_2$.
\end{lemma}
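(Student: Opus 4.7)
The plan is to exhibit the obvious candidate $\rho_1\otimes\rho_2 : G_1\times G_2 \to GL(V_1\otimes V_2)$ given by $(g_1,g_2)\mapsto \rho_1(g_1)\otimes\rho_2(g_2)$, where the right-hand side denotes the tensor product of the two linear operators, and then verify the two axioms of a linear representation: each image lies in $GL(V_1\otimes V_2)$, and the assignment is a group homomorphism.

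First I would construct the operator $\rho_1(g_1)\otimes\rho_2(g_2)$ on $V_1\otimes V_2$ via the universal property of the tensor product, starting from the bilinear map $(v_1,v_2)\mapsto \rho_1(g_1)v_1\otimes\rho_2(g_2)v_2$. Invertibility is then immediate: on simple tensors one checks that $\rho_1(g_1^{-1})\otimes\rho_2(g_2^{-1})$ is a two-sided inverse, and the identity extends by linearity to all of $V_1\otimes V_2$, so each image lies in $GL(V_1\otimes V_2)$.

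The main computation is the homomorphism property. Using componentwise multiplication in $G_1\times G_2$ together with the fact that $\rho_1,\rho_2$ are themselves homomorphisms,
$$(\rho_1\otimes\rho_2)\bigl((g_1,g_2)(h_1,h_2)\bigr) = \rho_1(g_1 h_1)\otimes \rho_2(g_2 h_2) = \bigl(\rho_1(g_1)\rho_1(h_1)\bigr)\otimes \bigl(\rho_2(g_2)\rho_2(h_2)\bigr).$$
The remaining ingredient is the standard tensor identity $(A\otimes B)(C\otimes D) = (AC)\otimes(BD)$, which I would verify by testing both sides on a simple tensor $v_1\otimes v_2$ and extending bilinearly. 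Applying this identity with $A=\rho_1(g_1)$, $B=\rho_2(g_2)$, $C=\rho_1(h_1)$, $D=\rho_2(h_2)$ rewrites the right-hand side above as $(\rho_1(g_1)\otimes\rho_2(g_2))\cdot(\rho_1(h_1)\otimes\rho_2(h_2))$, which is exactly $(\rho_1\otimes\rho_2)(g_1,g_2)\cdot(\rho_1\otimes\rho_2)(h_1,h_2)$. Preservation of the identity element is trivial since $\rho_1(e_{G_1})\otimes\rho_2(e_{G_2}) = \mathrm{id}_{V_1}\otimes\mathrm{id}_{V_2} = \mathrm{id}_{V_1\otimes V_2}$.

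I do not expect any genuine obstacle here: the whole argument is encoded in the composition rule for tensor products of operators, a standard multilinear-algebra fact. The only step requiring a little care is invoking the universal property of the tensor product when first defining $\rho_1(g_1)\otimes\rho_2(g_2)$ on all of $V_1\otimes V_2$ rather than merely on simple tensors; once that extension is in place, every remaining verification reduces to bilinear computations on simple tensors.
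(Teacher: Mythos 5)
Your proof is correct and follows the standard argument (the one found in the cited reference, Serre): define $(\rho_1\otimes\rho_2)(g_1,g_2)=\rho_1(g_1)\otimes\rho_2(g_2)$ via the universal property and verify the homomorphism law with the identity $(A\otimes B)(C\otimes D)=(AC)\otimes(BD)$. The paper itself gives no proof, simply citing Serre, so there is nothing further to compare.
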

\begin{theorem}\cite{Serre}
\label{thm_tensor}
\begin{enumerate}
\item If $\rho_1$ and $\rho_2$ are irreducible, then $\rho_1\otimes\rho_2$ is an irreducible representation of $G_1\times G_2$.
\item Each irreducible representation of $G_1\times G_2$ is equivalent to a representation $\rho_1\otimes\rho_2$, where $\rho_i$ is an irreducible representation of $G_i,~i=1,2.$ 
\end{enumerate}
\end{theorem}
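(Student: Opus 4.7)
The plan is to attack both parts using character theory, which converts questions about irreducibility and equivalence of representations into combinatorial identities among class functions on the finite groups. Recall that the character of a tensor product representation $\rho_1\otimes\rho_2$ of $G_1\times G_2$ is the product character $\chi_{\rho_1\otimes\rho_2}(g_1,g_2)=\chi_{\rho_1}(g_1)\,\chi_{\rho_2}(g_2)$, and that a representation of a finite group is irreducible if and only if its character $\chi$ satisfies $\langle\chi,\chi\rangle=1$, where the inner product is the usual averaging over the group.

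For part (1), I would compute the inner product of $\chi_{\rho_1\otimes\rho_2}$ with itself over $G_1\times G_2$. Because the sum factors over the two coordinates, I get
\begin{equation}
\langle\chi_{\rho_1\otimes\rho_2},\chi_{\rho_1\otimes\rho_2}\rangle_{G_1\times G_2}=\langle\chi_{\rho_1},\chi_{\rho_1}\rangle_{G_1}\cdot\langle\chi_{\rho_2},\chi_{\rho_2}\rangle_{G_2}=1\cdot 1=1,
\end{equation}
where the two factors each equal $1$ by irreducibility of $\rho_1$ and $\rho_2$. Hence $\rho_1\otimes\rho_2$ is irreducible.

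For part (2), I would combine a counting argument with a non-isomorphism check. First, whenever $(\rho_1,\rho_2)\ne(\rho_1',\rho_2')$ are distinct pairs of irreducibles of $G_1$ and $G_2$, the same factorization of the inner product shows $\langle\chi_{\rho_1}\chi_{\rho_2},\chi_{\rho_1'}\chi_{\rho_2'}\rangle=\langle\chi_{\rho_1},\chi_{\rho_1'}\rangle\langle\chi_{\rho_2},\chi_{\rho_2'}\rangle=0$, so distinct tensor products are inequivalent. Second, the sum of squared degrees over all tensor products equals
\begin{equation}
\sum_{\rho_1,\rho_2}(\dim\rho_1\cdot\dim\rho_2)^2=\Bigl(\sum_{\rho_1}(\dim\rho_1)^2\Bigr)\Bigl(\sum_{\rho_2}(\dim\rho_2)^2\Bigr)=|G_1|\cdot|G_2|=|G_1\times G_2|,
\end{equation}
using the standard identity $\sum(\dim\rho)^2=|G|$ for a finite group. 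Since this matches the sum of squared degrees over all irreducibles of $G_1\times G_2$, the family $\{\rho_1\otimes\rho_2\}$ must exhaust the irreducible representations up to equivalence.

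The main technical obstacle, and the only place where one needs to be careful, is justifying the character-product formula $\chi_{\rho_1\otimes\rho_2}(g_1,g_2)=\chi_{\rho_1}(g_1)\chi_{\rho_2}(g_2)$; this follows because in a basis of $V_1\otimes V_2$ built from bases of $V_1$ and $V_2$, the matrix of $\rho_1(g_1)\otimes\rho_2(g_2)$ is the Kronecker product of the matrices of $\rho_1(g_1)$ and $\rho_2(g_2)$, whose trace is the product of the two traces. Once this formula is in hand, both parts of the theorem follow from the orthogonality relations applied factor-by-factor, so no further representation-theoretic machinery beyond what is already invoked in Theorem \ref{thm_abelian} and Lemma \ref{lem_tensor} is needed.
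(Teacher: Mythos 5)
Your proof is correct and is essentially the standard character-theoretic argument from Serre's book, which the paper simply cites for this theorem without reproducing a proof. Both parts follow, as you show, from the product formula for characters together with the orthogonality relations and the identity $\sum(\dim\rho)^2=|G|$, so there is nothing to add.
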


Now that we have introduced the necessary tools required, the problem is to find unitary matrix representations of the finite group $G$. Before we proceed, note that when $G$ is interpreted as a finite group, the representation of $-1$ does not necessarily have anything to do with $-1$ times identity matrix and similarly for a generic $-b,b\in\mathscr{B}_{(g-1)}^{\lambda}$. Such a representation $\rho$, where $\rho(-1)\neq-\rho(1)$ is said to be a degenerate representation. Degenerate representations are not representations of the algebra $\mathbb{A}_{(g-1)}^{\lambda}$. Thus we are interested in a non-degenerate unitary representation $\rho$ of the finite group $G$ such that the following conditions are satisfied.
\begin{enumerate}
\item $\rho(x)\neq\pm I,~ \forall~ x\in G,x\neq\pm 1$.
\item $\rho(x)\neq\pm\rho(y),~\forall~x\neq y\in G_{\delta}$ 
\item The degree of representation should be as small as possible.
\end{enumerate}

The first two conditions are required for \textit{unique decodability} \cite{KaR1}, since otherwise there will be problems in decoding for certain choice of signal sets. This issue is elaborated in \cite{KaR1} and hence not repeated here. 
\begin{theorem}
\label{thm_maxrate}
The maximum rate (in complex symbols per channel use) of a $\lambda$-real symbol ML decodable CUW code in $K=g\lambda$ real variables for $\lambda=2^a,a\in\mathbb{N}$ is given by
$$
R_{max}=\frac{g}{2^{\left(\lfloor\frac{(g-1)}{2}\rfloor+1\right)}} 
$$
\end{theorem}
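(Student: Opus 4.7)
The plan is to convert the problem, as the authors have already begun to do, into finding the minimum-dimensional non-degenerate representation of the finite group $G=G_\mu\times G_\delta$ that satisfies conditions (1) and (2) together with the implicit requirement that the $K=g\lambda$ weight matrices $\{\rho(\gamma_j\prod_i\delta_i)\}$ be $\mathbb{R}$-linearly independent (so that the resulting linear design actually carries $K$ independent real symbols). The overall strategy is to exploit the direct-product structure $G=G_\mu\times G_\delta$ to separate a ``Clifford-group part'' from a ``character part'', bound each separately, and then exhibit a matching construction.

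By Lemma~\ref{lem_tensor} and Theorem~\ref{thm_tensor}, every irreducible complex representation of $G$ has the form $\rho_\mu\otimes\chi$ with $\rho_\mu$ an irreducible representation of the Clifford group $G_\mu$ and, by Theorem~\ref{thm_abelian}, $\chi$ a one-dimensional character of the abelian group $G_\delta$; using (R2) any candidate $\rho$ therefore decomposes as $\rho=\bigoplus_\ell(\rho_\mu^{(\ell)}\otimes\chi^{(\ell)})$ with $N_t=\sum_\ell\dim\rho_\mu^{(\ell)}$. Non-degeneracy $\rho(-1)=-I$ forces $\rho_\mu^{(\ell)}(-1)=-I$ on every block, so each $\rho_\mu^{(\ell)}$ is a non-degenerate irreducible representation of the Clifford group on $g-1$ generators, and classical Clifford theory gives the lower bound $\dim\rho_\mu^{(\ell)}\geq 2^{\lfloor(g-1)/2\rfloor}$.

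The heart of the argument is forcing all $\lambda$ characters of $G_\delta$ to actually appear among the $\chi^{(\ell)}$'s. On block $\ell$ the weight matrix $\rho(\gamma_j\prod_i\delta_i)$ equals $\chi^{(\ell)}(\prod_i\delta_i)\,\rho_\mu^{(\ell)}(\gamma_j)$. A short trace argument (using that each $\gamma_i$ and each product $\gamma_i\gamma_j$ with $i\neq j$ is traceless in any non-degenerate irrep) shows that the $g$ matrices $\{I,\gamma_1,\dots,\gamma_{g-1}\}$ are $\mathbb{R}$-linearly independent in every block, so an $\mathbb{R}$-linear relation $\sum_{j,k}c_{jk}\rho(\gamma_j\prod_i\delta_i)=0$ reduces, for each $j$ separately, to the condition that the vector $(c_{jk})_k\in\mathbb{R}^\lambda$ is annihilated by every character-value row $(\chi^{(\ell)}(\prod_i\delta_i))_k$. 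By Schur orthogonality of characters, these rows span $\mathbb{R}^\lambda$ if and only if all $\lambda$ distinct characters of $G_\delta$ occur among the $\chi^{(\ell)}$'s, so $\mathbb{R}$-linear independence of the $K$ weight matrices forces the number of summands to be at least $\lambda$. Combining with the Clifford bound gives $N_t\geq \lambda\cdot 2^{\lfloor(g-1)/2\rfloor}$ and hence $R=g\lambda/(2N_t)\leq g/2^{\lfloor(g-1)/2\rfloor+1}$.

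For the matching upper bound I would fix a single non-degenerate irreducible representation $\rho_\mu$ of $G_\mu$ of minimum dimension $2^{\lfloor(g-1)/2\rfloor}$ and take $\rho:=\bigoplus_{\chi\in\widehat{G_\delta}}(\rho_\mu\otimes\chi)$; non-degeneracy, conditions (1) and (2), and linear independence of the weight matrices (via the orthogonality of the character table of $G_\delta$, which is a Hadamard matrix up to scaling) are then routine to verify. The main obstacle is the character-count step: condition (2) in isolation only requires the $\chi^{(\ell)}$'s to separate the elements of $G_\delta$ up to a global sign, which can be arranged with as few as $a+1$ characters; the stronger constraint that all $\lambda=2^a$ distinct characters appear comes purely from the linear-independence requirement on the weight matrices, and Schur orthogonality provides the clean bridge between the two.
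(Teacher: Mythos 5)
Your proposal is correct, and it lives in the same framework as the paper's proof: decompose $G=G_\mu\times G_\delta$, use Theorems \ref{thm_abelian} and \ref{thm_tensor} to write any candidate representation as a direct sum of blocks $\rho_\mu^{(\ell)}\otimes\chi^{(\ell)}$, bound each block below by the non-degenerate Clifford minimum $2^{\lfloor(g-1)/2\rfloor}$, and realize the bound with $\bigoplus_{\chi\in\widehat{G_\delta}}\rho_\mu\otimes\chi$, which is exactly the paper's explicit representation when $\lambda=2$ and its evident generalization. Where you genuinely differ is in why at least $\lambda$ blocks are needed. The paper obtains ``two blocks'' for $\lambda=2$ from conditions (1)--(2) and then asserts the general case by ``applying the same arguments repeatedly''; as you correctly observe, conditions (1)--(2) by themselves only force the characters $\chi^{(\ell)}$ to separate $G_\delta$ up to a global sign, which is achievable with as few as $a+1$ of them. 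For instance, with $a=2$ the three characters $\chi_{00},\chi_{10},\chi_{01}$ of $C_2\times C_2$ give $\rho(1),\rho(\delta_1),\rho(\delta_2),\rho(\delta_1\delta_2)$ the block-sign patterns $(+,+,+),(+,-,+),(+,+,-),(+,-,-)$, no two of which agree up to sign --- yet $\rho(1)-\rho(\delta_1)-\rho(\delta_2)+\rho(\delta_1\delta_2)=0$, so such a design would not carry $K$ independent real symbols. Your substitute --- demanding $\mathbb{R}$-linear independence of the $K$ weight matrices, reducing it blockwise via the tracelessness of $\rho_\mu(\gamma_i)$ and $\rho_\mu(\gamma_i\gamma_j)$, and invoking character orthogonality to force all $\lambda$ characters of $G_\delta$ to appear --- is the constraint that actually drives the count, and it is in any case necessary for unique decodability; so your route turns the paper's ``it can be checked'' into a proof, at the price of making explicit an assumption the paper leaves implicit. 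One small repair: the trace argument needs $g\geq 3$, since for $g=2$ the non-degenerate irreducible representation of $G_\mu$ is one-dimensional with $\rho_\mu(\gamma_1)=\pm i$, and there you should argue the $\mathbb{R}$-linear independence of $1$ and $i$ directly.
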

\begin{proof}
Let us first consider the case $\lambda=2$ and arbitrary $g$. Then the finite group $G=G_{\mu}\times G_{\delta}$, where $G_{\delta}=\left\{1,\delta_1\right\}$. Since we are interested in minimizing the degree of representation, we first study the irreducible representations of $G$. From Theorem \ref{thm_tensor}, we know that all the irreducible representations of $G$ are obtained as tensor product of the irreducible representations of $G_{\mu}$ and $G_{\delta}$. All the irreducible representations of $G_{\mu}$ are available in \cite{TiH}. The non-degenerate irreducible representations of $G_{\mu}$ are also available in \cite{TiH} for dimension $2^{\lfloor\frac{g-1}{2}\rfloor}$. Let us denote the non-degenerate representation of $G_{\mu}$ by $\rho_1$. The group $G_{\delta}$ is Abelian. Thus, from Theorem \ref{thm_abelian}, all the irreducible representations of $G_{\delta}$ are in dimension $1$. Apart from the trivial representation (all elements are mapped to $1$), there is only one irreducible representation $\rho_2$ of $G_{\delta}$ given by 
$$
\rho_2(1)=1,\quad\rho_2(\delta_1)=-1
$$
Thus, we have explicitly obtained all the irreducible representations of $G$ for the case $\lambda=2$. They are in dimension $2^{\lfloor\frac{g-1}{2}\rfloor}$. However, all the irreducible representations of $G$ fail to satisfy the required conditions. Thus we seek reducible representations of $G$. Reducible representations can be easily constructed by placing irreducible representations on the blocks of the diagonal. By doing so, it can be checked that the smallest dimension reducible representation $\rho$ of $G$ satisfying the requirements is $2\left(2^{\lfloor\frac{g-1}{2}\rfloor}\right)$. It is given explicitly as follows.
\begin{equation}
\begin{array}{l}
\rho(1)=I_{2m},~\rho(\delta_1)=\left[\begin{array}{cc}I_m & 0\\0 & -I_m\end{array}\right]\\
\rho(\gamma_i)=\left[\begin{array}{cc}\rho^1(\gamma_i) & 0\\0 & \rho^1(\gamma_i)\end{array}\right],~ i=1,\dots,(g-1)
\end{array}
\end{equation}
where, $m=2^{\lfloor\frac{g-1}{2}\rfloor}$.
By applying the same arguments as in the case of $\lambda=2$ repeatedly, it can be shown that the smallest degree of representation satisfying the requirements for $\lambda=2^a,a\in\mathbb{N}$ is $\lambda\left(2^{\lfloor\frac{g-1}{2}\rfloor}\right)$. Thus the maximum rate is given by $R_{max}=\frac{(\frac{g\lambda}{2})}{\lambda\left(2^{\left\lfloor\frac{g-1}{2}\right\rfloor}\right)}=\frac{g}{2^{\left(\left\lfloor\frac{g-1}{2}\right\rfloor+1\right)}} $     
\end{proof}
Note that the expression for $R_{max}$ is independent of $\lambda$. Moreover, the construction of maximal rate CUW code for $\lambda=2$ in the proof of Theorem \ref{thm_maxrate} can be easily generalized for any $\lambda=2^a,a\in\mathbb{N}$. Observe that such a construction leads to weight matrices with a block diagonal structure. This is because we use reducible representation of groups. 

\subsection{Algebraic explanation for Tensor product construction}
In \cite{KaR2}, a construction of CUW codes based on tensor products was provided without giving any reasoning for the mathematical source of such a construction. With the algebraic background that we have now developed, the tensor product construction in \cite{KaR2} can be easily explained. Since the group $G$ is a direct product of $G_\mu$ and $G_{\delta}$, from Lemma \ref{lem_tensor}, a representation of $G$ can be obtained as a tensor product of a representation of $G_{\mu}$ and that of $G_{\delta}$. The unitary matrix representation of $G_{\mu}$ is available in \cite{TiH}. The unitary  matrices representing $G_{\delta}$ should commute and also square to $I$. Such matrices can be constructed easily, since they are simultaneously diagonalizable and their eigen values are equal to $\pm 1$ (squaring to $I$). The construction suggested in \cite{KaR2} is precisely based on this principle.

\section{Algebraic explanation for ABBA construction}
\label{sec5}
In this section, an algebraic explanation is given for the 'ABBA' construction \cite{TBH} based on the concept of right module algebra over non-commutative rings. 

As illustrated in Section \ref{sec3}, the algebra $\mathbb{A}_n^{2^a}$ over $\mathbb{R}$ can also be viewed as a finitely generated right module algebra over $Cliff_n$. Let $L=2^a$. Then a general element $x$ of the algebra $\mathbb{A}_n^L$ can be written as follows.
\begin{equation}
x=c_1+\delta_1c_2+\dots+\delta_ac_{a+1}+\delta_1\delta_2c_{a+2}+\dots+(\prod_{i=1}^{a}\delta_i)c_{L}
\end{equation}
where $c_i,i=1,\dots,L~\in Cliff_n$. There is a natural embedding of $\mathbb{A}_n^L$ into $End_{Cliff_n}(\mathbb{A}_n^L)$ given by left multiplication as follows.
\begin{equation}
\begin{array}{l}
\phi:\mathbb{A}_n^L\mapsto \mathrm{End}_{Cliff_n}(\mathbb{A}_n^L),\\
\phi(x)=L_x:y\mapsto xy.
\end{array}
\end{equation}
It is easy to check that the map $L_x$ is $Cliff_n$ linear and the map $\phi$ is a ring homomorphism. Hence, we can represent the algebra $\mathbb{A}_n^L$ by matrices with entries from Clifford algebra. However, we are only interested in matrix representations with entries from the complex field. But this can be easily obtained by simply replacing each Clifford algebra element by its matrix representation over $\mathbb{C}$. This is possible because the matrix representation of $Cliff_n$ over $\mathbb{C}$ is well known and is explicitly given in \cite{TiH}. We now illustrate this with an example.
\begin{eg}
Consider $\mathbb{A}_n^{2^a}$ for $a=2$. A general element $x\in \mathbb{A}_n^4$ can be expressed as follows.
$$
x=c_1+\delta_1c_2+\delta_2c_3+\delta_1\delta_2c_4
$$
where, $c_i,i=1,\dots,4 \in Cliff_n$. Let us now obtain a matrix representation over $Cliff_n$ for the map $L_x$. We have,
\begin{equation}
\begin{array}{rcl}
L_x(1)&=&c_1+\delta_1c_2+\delta_2c_3+\delta_1\delta_2c_4\\
L_x(\delta_1)&=&(c_1+\delta_1c_2+\delta_2c_3+\delta_1\delta_2c_4)\delta_1\\
&=&\delta_1c_1+c_2+\delta_1\delta_2c_3+\delta_2c_4\\
L_x(\delta_2)&=&(c_1+\delta_1c_2+\delta_2c_3+\delta_1\delta_2c_4)\delta_2\\
&=&\delta_2c_1+\delta_1\delta_2c_2+c_3+\delta_1c_4\\
L_x(\delta_1\delta_2)&=&(c_1+\delta_1c_2+\delta_2c_3+\delta_1\delta_2c_4)\delta_1\delta_2\\
&=&\delta_1\delta_2c_1+\delta_2c_2+\delta_1c_3+c_4. 
\end{array}
\end{equation}
\end{eg}
Thus the map $L_x$ can be represented as the following matrix
\begin{equation}
\left[\begin{array}{cccc}
c_1 & c_2 & c_3 & c_4\\
c_2 & c_1 & c_4 & c_3\\
c_3 & c_4 & c_1 & c_2\\
c_4 & c_3 & c_2 & c_1
\end{array}\right]
\end{equation}
where, $c_1,c_2,c_3,c_4\in Cliff_n$. Now to get a matrix representation over $\mathbb{C}$, we simply replace each $c_i,i=1,\dots,4$ by their matrix representations over $\mathbb{C}$. However, to get a $\lambda$-real symbol ML decodable code, we are interested only in the linear design obtained using matrix representation of the specific elements $\delta_k,k=1,\dots,a$, $\bigcup_{m=2}^{a}\prod_{i=1}^{m}\delta_{k_i}|1\leq k_i\leq k_{i+1}\leq a$, $1$, $\gamma_1$, $\gamma_2$, $\dots$, $\gamma_n$ of the algebra. This can be obtained by simply restricting the representation to the subspace over $\mathbb{R}$ generated by the required elements of the algebra. In other words, we put zero for the coefficients corresponding to the terms not required. Thus, we simply replace $c_i,i=1,\dots,4$ by CODs. Hence, we obtain a $\lambda$-real symbol ML decodable CUW code with maximal rate. It turns out that the above construction is precisely the ABBA construction proposed by Tirkkonen et al in \cite{TBH}.

As a consequence of this result, it follows that the $4$ transmit antenna linear design based on ABBA construction shown below is a $2$-real symbol ML decodable code
\begin{displaymath}
\left[\begin{array}{crcr}
x_1 & -x_2^* & x_3 & -x_4^*\\
x_2 & x_1^* & x_4 & x_3^*\\
x_3 & -x_4^* & x_1 & -x_2^*\\
x_4 & x_3^* & x_2 & x_1^*
\end{array}\right].
\end{displaymath}

Though the same linear design was proposed earlier in \cite{TBH}, the authors of \cite{TBH} chose the following pairing of real variables which essentially made the linear design into a $4$-real symbol ML decodable code.
\begin{enumerate}
\item First group $\left\{x_{1I},x_{1Q}\right\}$ 
\item Second group $\left\{x_{2I},x_{2Q}\right\}$
\item Third group $\left\{x_{3I},x_{3Q}\right\}$
\item Fourth group $\left\{x_{4I},x_{4Q}\right\}$.
\end{enumerate}
However, if we form the following partition of real variables, we can obtain a single complex symbol ML decodable code. 
\begin{enumerate}
\item First group $\left\{x_{1I},x_{3I}\right\}$ 
\item Second group $\left\{x_{1Q},x_{3Q}\right\}$
\item Third group $\left\{x_{2I},x_{4I}\right\}$
\item Fourth group $\left\{x_{2Q},x_{4Q}\right\}$.
\end{enumerate}
Note that the pair of real variables in one group should be allowed to take values independently of the real variables in other groups. For example, the pair of real variables $x_{1I},x_{3I}$ should take values jointly from a two dimensional constellation independently of the real variables in other groups. Thus we see that the ML decoding complexity of STBCs obtained from linear designs can vary dramatically depending on the partitioning of real variables into groups and the choice of signal sets.  

\section{Discussions}
\label{sec6}
The CUW codes \cite{KaR1,KaR3,KaR2} are based on sufficient conditions for $g$-group ML decodability. An algebraic framework for $g$-group ML decodable codes based on the necessary and sufficient conditions and the maximal rate of general $g$-group ML decodable codes are currently under investigation. 
% use section* for acknowledgement
\section*{Acknowledgment}
This work was supported through grants to B.S.~Rajan; partly by the
IISc-DRDO program on Advanced Research in Mathematical Engineering, and partly
by the Council of Scientific \& Industrial Research (CSIR, India) Research
Grant (22(0365)/04/EMR-II). The first author would like to thank Prof. D.P. Patil for exposing him to the area of commutative algebra through special courses and also for suggesting \cite{Serre} as a good reference for representation theory.

% trigger a \newpage just before the given reference
% number - used to balance the columns on the last page
% adjust value as needed - may need to be readjusted if
% the document is modified later
%\IEEEtriggeratref{8}
% The "triggered" command can be changed if desired:
%\IEEEtriggercmd{\enlargethispage{-5in}}

% references section
% NOTE: BibTeX documentation can be easily obtained at:
% http://www.ctan.org/tex-archive/biblio/bibtex/contrib/doc/

% can use a bibliography generated by BibTeX as a .bbl file
% standard IEEE bibliography style from:
% http://www.ctan.org/tex-archive/macros/latex/contrib/supported/IEEEtran/bibtex
%\bibliographystyle{IEEEtran.bst}
% argument is your BibTeX string definitions and bibliography database(s)
%\bibliography{IEEEabrv,../bib/paper}
%
% <OR> manually copy in the resultant .bbl file
% set second argument of \begin to the number of references
% (used to reserve space for the reference number labels box)

\end{document}